\newtheorem{theorem}{Theorem}
\newtheorem{lemma}{Lemma}[section]
\newtheorem{example}{Example}
\title{Variables Ordering Optimization in Boolean Characteristic Set Method Using Simulated Annealing and Machine Learning-based Time Prediction}
\author{
 Minzhong Luo \\
  School of Computer Science\\
  Southwest University of Science And Technology\\
  No.59,Middle Section of Qinglong Avenue,Mianyang,China\\
  \texttt{luominzhong@swust.edu.cn} \\
   \And
 Yudong Sun \\
  Independent\\
  Houston, TX, 77077\\
  \texttt{yskysun@gmail.com} \\
  \And
 Yin Long \\
  School of Computer Science\\
  Southwest University of Science And Technology\\
  No.59,Middle Section of Qinglong Avenue,Mianyang,China\\
  \texttt{yinnuoloong@163.com} \\
  \And
 Zhu Genhao \\
  School of Computer Science\\
  Southwest University of Science And Technology\\
  No.59,Middle Section of Qinglong Avenue,Mianyang,China\\
  \texttt{zhugenhao@swust.edu.cn} \\
}
\begin{document}
\maketitle
\begin{abstract}
Solving systems of Boolean equations is a fundamental task in symbolic computation and algebraic cryptanalysis, with wide-ranging applications in cryptography, coding theory, and formal verification. Among existing approaches, the Boolean Characteristic Set (BCS) method\cite{Huang2014BCS} has emerged as one of the most efficient algorithms for tackling such problems. However, its performance is highly sensitive to the ordering of variables, with solving times varying drastically under different orderings for fixed variable counts $n$ and equations size $m$. To address this challenge, this paper introduces a novel optimization framework that synergistically integrates machine learning (ML)-based time prediction with simulated annealing (SA) to efficiently identify high-performance variables orderings. We construct a dataset comprising variable frequency spectrum $X$ and corresponding BCS solving time $t$ for benchmark systems(e.g., $n=m=28$). Utilizing this data, we train an accurate ML predictor $f_t(X)$ to estimate solving time for any given variables ordering. For each target system, $f_t$ serves as the cost function within an SA algorithm, enabling rapid discovery of low-latency orderings that significantly expedite subsequent BCS execution. Extensive experiments demonstrate that our method substantially outperforms the standard BCS algorithm\cite{Huang2014BCS}, Gröbner basis method \cite{B2009PolyBoRi} and SAT solver\cite{CryptoMiniSAT2012}, particularly for larger-scale systems(e.g., $n=32$). Furthermore, we derive probabilistic time complexity bounds for the overall algorithm using stochastic process theory, establishing a quantitative relationship between predictor accuracy and expected solving complexity. This work provides both a practical acceleration tool for algebraic cryptanalysis and a theoretical foundation for ML-enhanced combinatorial optimization in symbolic computation.
\end{abstract}

\keywords{Boolean Equations Solving \and Variables Order Optimization \and Time Predictor \and Simulated Annealing}

\section{Introduction}

\subsection{Research Background}
Solving Boolean equations, defined over the ring $\mathbb{F}_2[x_1, x_2, \ldots, x_n]/\langle x_1^2 + x_1, \ldots, x_n^2 + x_n \rangle$, is a central problem with profound and significant implications in computer science, symbolic computation, and cryptanalysis \cite{Cox2004Ideals,MC2008BSAT}. Boolean equations $\mathcal{P} = \{f_1, f_2, \ldots, f_m\}$ consists of $m$ polynomials in $n$ binary variables $x_1, x_2, \ldots, x_n$, where the task is to determine all solutions in $\mathbb{F}_2^n$ that satisfy $f_i(x_1,\ldots,x_n) = 0$ for $1 \le i \le m$, where solutions must satisfy both polynomial equations and field constraints. Such systems frequently arise in algebraic attacks on block ciphers, stream cipher analysis, hardware verification, and combinatorial optimization problems\cite{Bard2007MQ2SAT,Chai2008Char}, the NP-hard nature of this problem \cite{Hart1982C} necessitates efficient computational methods, particularly as cryptographic systems increasingly rely on the computational intractability of solving structured Boolean equations \cite{B2007Algebraic,C2002Crypt}.

\begin{figure}
\centering
\includegraphics[width=178mm]{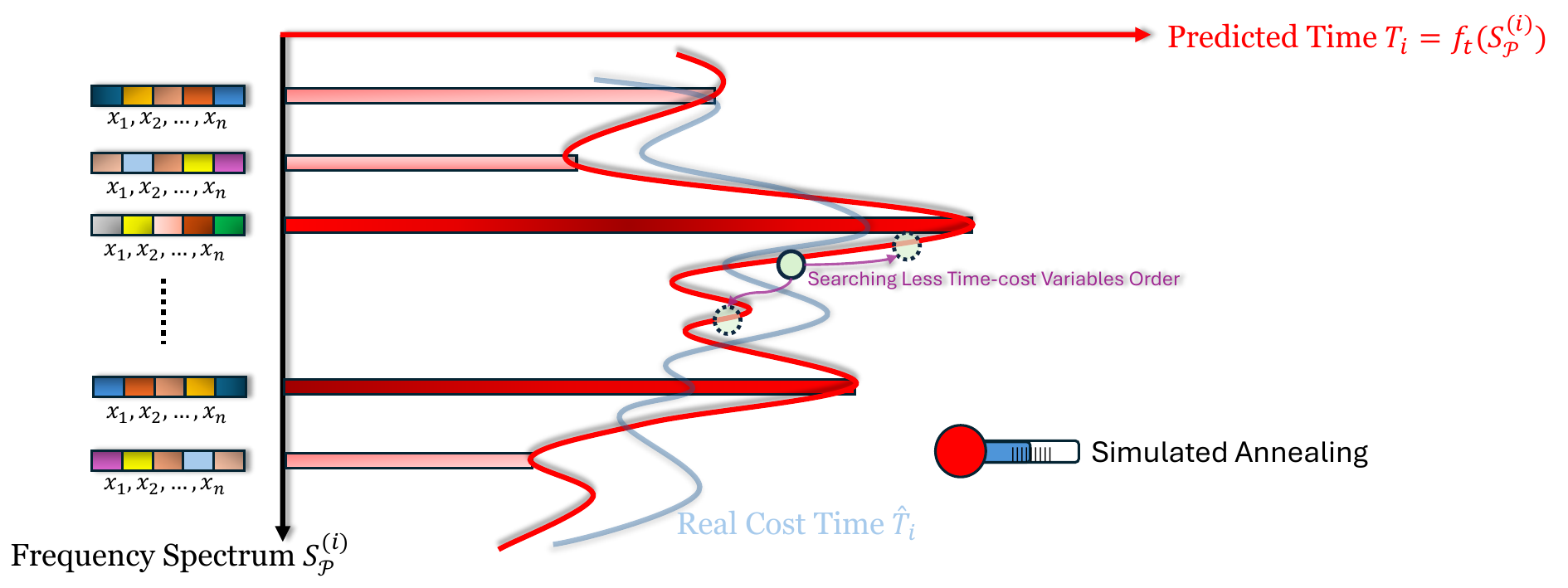}
\caption{Variables ordering optimization with simulated annealing and time predictor.}
\label{fig:Overview}
\end{figure}

Due to the inherent $\mathsf{NP}$-hardness of of solving Boolean equations, a wide variety of approaches have been developed from different perspectives, each exploiting distinct algorithmic mechanisms. In general, the representative methods can be categorized into the following three classes:

\begin{itemize}
    \item \textbf{Algebraic approaches}: These methods directly rely on tools from computational algebraic geometry and symbolic computation. A classical representative is the Gröbner basis method, which eliminates leading terms through polynomial reduction \cite{B2009PolyBoRi}. Despite its theoretical completeness, the Gröbner basis approach often suffers from exponential complexity as the number of variables and equations increases. Another significant line of research is Wu’s method, which has been adapted to the Boolean setting through triangular set decomposition. The resulting Boolean Characteristic Set (BCS) algorithm \cite{Huang2014BCS} provides a more efficient algebraic solver in practice and has become one of the most effective tools for Boolean system solving in cryptographic applications.
    
    \item \textbf{Transformation-based approaches}: Instead of directly solving the polynomial system, these methods transform the problem into an equivalent formulation that can be tackled by well-developed solvers in other domains. For example, Boolean equations can be translated into satisfiability problems and solved using SAT solvers \cite{Bard2007MQ2SAT}, or reformulated as mixed integer linear programming (MILP) problems and solved by MILP optimization techniques \cite{Abdel2010MILP}. Such transformations allow researchers to leverage the significant progress in SAT and MILP solver development. However, the conversion process may introduce additional variables or constraints, leading to a potential blow-up in problem size.
    
    \item \textbf{Numerical optimization-based approaches}: A different line of work formulates the Boolean system as a global optimization problem by designing an objective function whose global minimum corresponds to a solution of the original system. This allows the application of heuristic or metaheuristic optimization methods, such as simulated annealing \cite{Bor2010Hill}. While these approaches can be more flexible in handling large-scale or noisy problem instances, they typically do not guarantee completeness and may require careful tuning to achieve good performance.
\end{itemize}

In summary, existing methods for solving Boolean equations offer complementary advantages: algebraic approaches provide theoretical rigor, transformation-based approaches benefit from powerful external solvers, and optimization-based approaches enable heuristic scalability. Among the various algebraic techniques for solving polynomial systems, Wu's method and its Boolean variants\cite{Zhou2012App,Huang2011MFCS}, like the Boolean Characteristic Set (BCS) method\cite{Huang2014BCS}, have been recognized for their structured triangular decomposition of polynomial equations. Unlike Gröbner basis methods \cite{JC1999F4,JC2006F5}, the core idea of Wu's method is to transform the input system into a characteristic set of variables based on a chosen variables ordering \cite{Wang1993EM,Gao2011Char}, by pseudo-division and successive elimination. This process can significantly simplify the solving task by reducing multivariate problems into univariate subproblems, offering advantages for certain cryptographic applications \cite{Huang2014BCS}. However, a well-known challenge and critical limitation is that the computational complexity of the BCS solver is highly sensitive to the selected variables ordering, computational complexity varies dramatically across different variable permutations despite identical equation systems \cite{Chai2008Char}. Different permutations of the variables can lead to drastically different sizes of the zero-decomposition tree, resulting in solution time variations by several orders of magnitude. Empirical studies demonstrate solution times for fixed $(n,m)$ can span orders of magnitude under distinct orderings, rendering naive implementations impractical for industrial-scale problems.

The sensitivity to variables ordering from the elimination mechanism has motivated the design of heuristic and metaheuristic search strategies aimed at finding near-optimal orderings. Conventional approaches such as greedy selection or static heuristics often fail to scale efficiently with the problem size due to the factorial growth ($n!$) of the permutation space. Suboptimal orderings may generate intermediate polynomials with substantially higher degrees, exponentially increasing computational costs during triangular decomposition \cite{Huang2011MFCS,WM2017De}. Consequently, identifying high-quality orderings becomes essential for practical efficiency. Therefore, exploring intelligent optimization frameworks, especially those that can leverage predictive models or machine learning techniques to guide the search, has become a promising direction for enhancing the efficiency of Wu's method and its derivatives.

Recent advances propose machine learning to predict computational costs of mathematical procedures \cite{Tobias2023Conn}, including combinatorial optimizations in symbolic computation. However, integrating such predictors into global search frameworks for BCS variables ordering remains unexplored. This gap motivates our work combining ML-based time prediction with stochastic optimization to navigate the complex solution space efficiently.

\subsection{Our Results}
In this paper, we present a novel method for optimizing the variables ordering in the Boolean Characteristic Set (BCS) method using simulated annealing (SA) and machine learning (ML)-based time prediction. Our results are structured into three main contributions: correlation analysis, algorithmic performance, and theoretical insights.

We begin by performing a series of experiments to explore the relationship between various factors and the computational time of the BCS solver. One key finding is the relationship between the ascending and descending order of the variable frequency spectrum and the solving time. Interestingly, our analysis reveals that there is no direct correlation between the frequency spectrum order and the solving time. Specifically, while similar variable frequency spectrum do not necessarily correspond to similar solving times, and conversely, solving times that are close to each other may not align with similar frequency spectrum. This observation challenges the conventional wisdom that variables ordering based on frequency spectrum can directly predict solving efficiency, highlighting the need for more sophisticated optimization techniques.

Our proposed method, which integrates the SA algorithm with the ML-based time predictor, significantly outperforms the traditional BCS method. In our experiments, we show that the optimization framework is especially effective for larger problem sizes, such as those with variables $n=30$. When compared with the standard BCS algorithm\cite{Huang2014BCS}, Gröbner basis method \cite{B2009PolyBoRi} and SAT solver\cite{CryptoMiniSAT2012}, our approach consistently achieves a marked reduction in solving times. This improvement demonstrates the practical potential of combining machine learning and simulated annealing in solving Boolean equations more efficiently.

From a theoretical standpoint, we derive a probabilistic time complexity bound for our algorithm. In particular, we introduce a theorem that formalizes the relationship between the accuracy of the time predictor $f_t$ and the expected solving time of the BCS algorithm. The theorem asserts that a higher prediction accuracy leads to a more favorable expected solving time, thus providing a quantitative foundation for the practical efficacy of our approach. This relationship is crucial for guiding future improvements in predictor design and optimization strategies, which we will explore further in subsequent sections.

In summary, our results demonstrate that the proposed variables ordering optimization framework not only improves the performance of the BCS solver but also provides a solid theoretical foundation for integrating machine learning techniques with combinatorial optimization in symbolic computation.

\vspace{-1em} 
\nopagebreak 
\section{Preliminaries}

\subsection{Basic Terminology}
In this section, we introduce key concepts and notations that are foundational to understanding the optimization of variables ordering in the Boolean Characteristic Set (BCS) method and the role of simulated annealing (SA) in enhancing computational efficiency.

Boolean equations $\mathcal{P}$ defined over the ring $\mathbb{F}_2[x_1, \ldots, x_n]/\langle x_1^2 + x_1, \ldots, x_n^2 + x_n \rangle$, where $m$ polynomials $\mathcal{P} = \{f_1, f_2, \ldots, f_m\}$ in $n$ binary variables, the goal is to find all solutions $(x_1, \ldots, x_n) \in \mathbb{F}_2^n$ satisfying $f_i(x_1, \ldots, x_n) = 0$ for $1 \leq i \leq m$. This problem is NP-hard \cite{Hart1982C} and arises in areas like algebraic cryptanalysis, hardware verification, and combinatorial optimization \cite{Bard2007MQ2SAT, Chai2008Char}.

\begin{example}
To illustrate the structure of Boolean equations and how Wu’s method can be applied, consider the following system with five variables and three equations over $\mathbb{F}_2$:

$\begin{cases}
f_1 = x_1x_2 + x_3 + 1 = 0, \\
f_2 = x_2x_4 + x_5 = 0, \\
f_3 = x_1 + x_4 + x_5 = 0.
\end{cases}$

Here, each $f_i$ is a Boolean polynomial, and solutions are binary vectors $(x_1, x_2, x_3, x_4, x_5) \in \mathbb{F}_2^5$ that simultaneously satisfy all equations.  
Wu’s method proceeds by constructing a triangular set of polynomials through systematic elimination. For instance, from $f_2$ we obtain $x_5 = x_2x_4$, and substituting this into $f_3$ gives $x_1 = x_2x_4 + x_4$. Next, substituting $x_1$ into $f_1$ yields

$(x_2x_4 + x_4)x_2 + x_3 + 1 = x_2^2x_4 + x_2x_4 + x_3 + 1 = x_2x_4 + x_3 + 1 = 0$

where we have used $x_2^2 = x_2$ in $\mathbb{F}_2$. Thus, $x_3 = x_2x_4 + 1$.  
Combining these results, the solution set can be expressed as

$\begin{cases}
x_1 = x_2x_4 + x_4, \\
x_3 = x_2x_4 + 1, \\
x_5 = x_2x_4, \\
x_2, x_4 \in \mathbb{F}_2 \text{ are free variables}.
\end{cases}$

\end{example}

Hence, the system admits four solutions corresponding to the choices of $(x_2, x_4) \in \{0,1\}^2$. This simple example demonstrates how Wu’s method reduces the system into a triangular form, systematically expressing dependent variables in terms of a subset of free variables, thereby enabling the complete solution set to be enumerated efficiently.

Wu's method is a well-known approach for solving polynomial systems, transforms polynomial systems into a triangular form via pseudo-division and successive elimination under a chosen variables ordering $\sigma = (x_{i_1} \prec x_{i_2} \prec \cdots \prec x_{i_n})$ \cite{Wang1993EM, Gao2011Char}. The Boolean variant (BCS) \cite{Huang2014BCS} decomposes the system into a characteristic set $\mathcal{C}$, enabling efficient zero-solving through univariate subproblems. The primary advantage of the BCS method over Gröbner basis methods is its ability to reduce multivariate problems into simpler univariate problems, significantly improving computational efficiency. The effectiveness of BCS stems from:

1. Triangularization: Iterative elimination reduces multivariate equations to univariate polynomials, avoiding complex Gröbner basis computations \cite{JC1999F4}.  

2. Degree Control: Optimal variables orderings minimize intermediate polynomial degrees, curbing combinatorial explosion during decomposition \cite{Huang2011MFCS}.  
However, BCS is highly sensitive to $\sigma$; solution times vary exponentially across permutations due to divergent zero-decomposition tree structures \cite{Chai2008Char}.

The effectiveness of the BCS algorithm is closely tied to the order in which the variables are eliminated. Given a chosen variables ordering $\sigma = (x_{i_1} \prec x_{i_2} \prec \cdots \prec x_{i_n})$, the algorithm eliminates variables in the prescribed order, simplifying the system step by step. However, a key challenge is that the computational cost of solving the system is highly sensitive to the chosen variables ordering. Different permutations of variables can lead to drastically different solution times due to the varying complexity of the intermediate steps, especially the size and depth of the zero-decomposition tree. Thus, finding an optimal or near-optimal ordering is critical for improving efficiency.

Simulated annealing (SA) \cite{Van1987SA} is a probabilistic optimization technique inspired by the annealing process in metallurgy. In the context of optimizing the variables ordering for the BCS method, SA explores the large search space of possible variables orderings by iteratively perturbing a current solution (ordering) and accepting new solutions based on a probability function that depends on the difference in cost and a temperature parameter. The temperature is gradually reduced over time, which lowers the likelihood of accepting worse solutions, thus allowing the algorithm to converge to a near-optimal ordering.

SA effectively navigates the $n!$-sized solution space by escaping Local Minima with non-greedy acceptance of higher-cost solutions avoids premature convergence, and controlled exploration via temperature decay balances exploration (high $T$) and exploitation (low $T$), this stochastic nature allows SA to explore the complex solution space of variables orderings, where traditional deterministic methods may become trapped in suboptimal configurations, which ultimately leads to better global search performance as follows:

1. Initializes a random permutation $\sigma_0$.  

2. Generates neighbor $\sigma'$ by swapping two variables in $\sigma_i$.  

3. Accepts $\sigma'$ as $\sigma_{i+1}$ with probability $P = \min\left(1, e^{-\Delta E / T_i}\right)$, where $\Delta E = f_t(\sigma') - f_t(\sigma_i)$ is the cost change predicted by $f_t$, and $T_i$ is the temperature at step $i$.  

4. Cools $T_i$ geometrically: $T_{i+1} = \alpha T_i$ ($\alpha < 1$).  

In our method, SA is used in combination with a machine learning model that predicts the solving time for different variables orderings, with the goal of minimizing the expected solving time for the BCS method. By incorporating the predicted time as a cost function, the algorithm efficiently navigates the search space of variables orderings, resulting in significant reductions in computational time for solving Boolean equations.

\subsection{Sensitivity to Variables Order of BCS}
The Boolean Characteristic Set (BCS) algorithm is a Boolean variant of Wu's method for solving Boolean equations $\mathbf{P} = \{ f_1, f_2, \ldots, f_m \} \subseteq \mathbf{R}_2$ that decomposes a system of Boolean polynomials into a triangular form $\mathcal{A}_1, \ldots, \mathcal{A}_r$, where $\operatorname{Zero}(\mathbf{P}) = \bigsqcup_{i=1}^r \operatorname{Zero}(\mathcal{A}_i)$, enabling efficient solving through successive elimination \cite{Chai2008Char,Huang2014BCS,Gao2011Char}. Its effectiveness lies in the \emph{zero-decomposition mechanism}: during each elimination step, the algorithm selects a variable according to a prescribed order and performs pseudo-remainder reductions to split the problem into simpler subsystems with:

\begin{equation}
\operatorname{Zero}\left(I x_c+U\right)=\operatorname{Zero}\left(x_c+U, I+1\right) \cup \operatorname{Zero}(I, U)
\end{equation}

this recursive decomposition often produces univariate or nearly univariate polynomials, from which solutions can be computed efficiently. Compared with Gröbner basis methods \cite{JC1999F4,JC2006F5,B2009PolyBoRi}, the BCS method avoids excessive reductions to zero and provides a more direct decomposition strategy, making it particularly powerful for Boolean systems.

Although BCS is efficient in principle, its computational performance is strongly affected by the chosen variables order. To investigate this, we conducted experiments where the BCS solver was fixed and only the permutation of variables was changed. To quantify this, we fixed a Boolean system $\mathbf{P}$ and ran the BCS solver under different permutations $\sigma \in S_n$: for a given system, we generated multiple random variables orders $\sigma = (x_{i_1}, x_{i_2}, \ldots, x_{i_n})$ and recorded the solving time of BCS for each case. The statistical results show that the solving time may differ by several orders of magnitude across different orderings. In some instances, a carefully chosen order led to rapid decomposition and shallow zero-decomposition trees, while an unfavorable order caused deep branching and exponential blow-up in intermediate polynomial sizes. These results clearly demonstrate that the BCS method is highly sensitive to variables ordering.

Theoretically, this sensitivity originates from the algebraic elimination mechanism. At each step, the algorithm eliminates $\operatorname{lvar}(P)$ for some polynomial $P$, where the choice of $\operatorname{lvar}$ is dictated by $\sigma$. If $\operatorname{lvar}(P)$ has high interaction degree, i.e.,

\begin{equation}
\deg\big(\operatorname{lm}(P)\big) = d \quad \text{with many variables in its support},
\end{equation}

then substituting or decomposing on this variable generates new polynomials of larger total degree $\operatorname{tdeg}$ and higher density. This rapidly increases the size of the active polynomial set and deepens the decomposition tree. Conversely, if $\sigma$ postpones such variables and eliminates low-degree or linear variables first, then each decomposition reduces the system dimension while keeping $\operatorname{tdeg}$ small.  

Formally, let $\mathcal{T}(\sigma)$ denote the decomposition tree under order $\sigma$. Its size satisfies

\begin{equation}
|\mathcal{T}(\sigma)| \;\leq\; \prod_{k=1}^n b_k(\sigma),
\end{equation}

where $b_k(\sigma)$ is the branching factor when eliminating $x_{i_k}$. Since $b_k(\sigma)$ is strongly correlated with $\deg_{x_{i_k}}(\mathbf{P})$ and the structure of substitutions, different permutations $\sigma$ lead to exponentially different $|\mathcal{T}(\sigma)|$ and hence runtime $\tau(\sigma)$. This explains the large variance in solving time we observed experimentally.

In summary, the BCS algorithm is effective due to its recursive zero-decomposition and triangularization strategy. However, this very mechanism makes it highly sensitive to variables order: both experiments and theoretical analysis confirm that $\sigma$ dictates the complexity of intermediate polynomials and the size of the decomposition tree, which directly determines solving time.

\begin{figure}
\centering
\includegraphics[width=147mm]{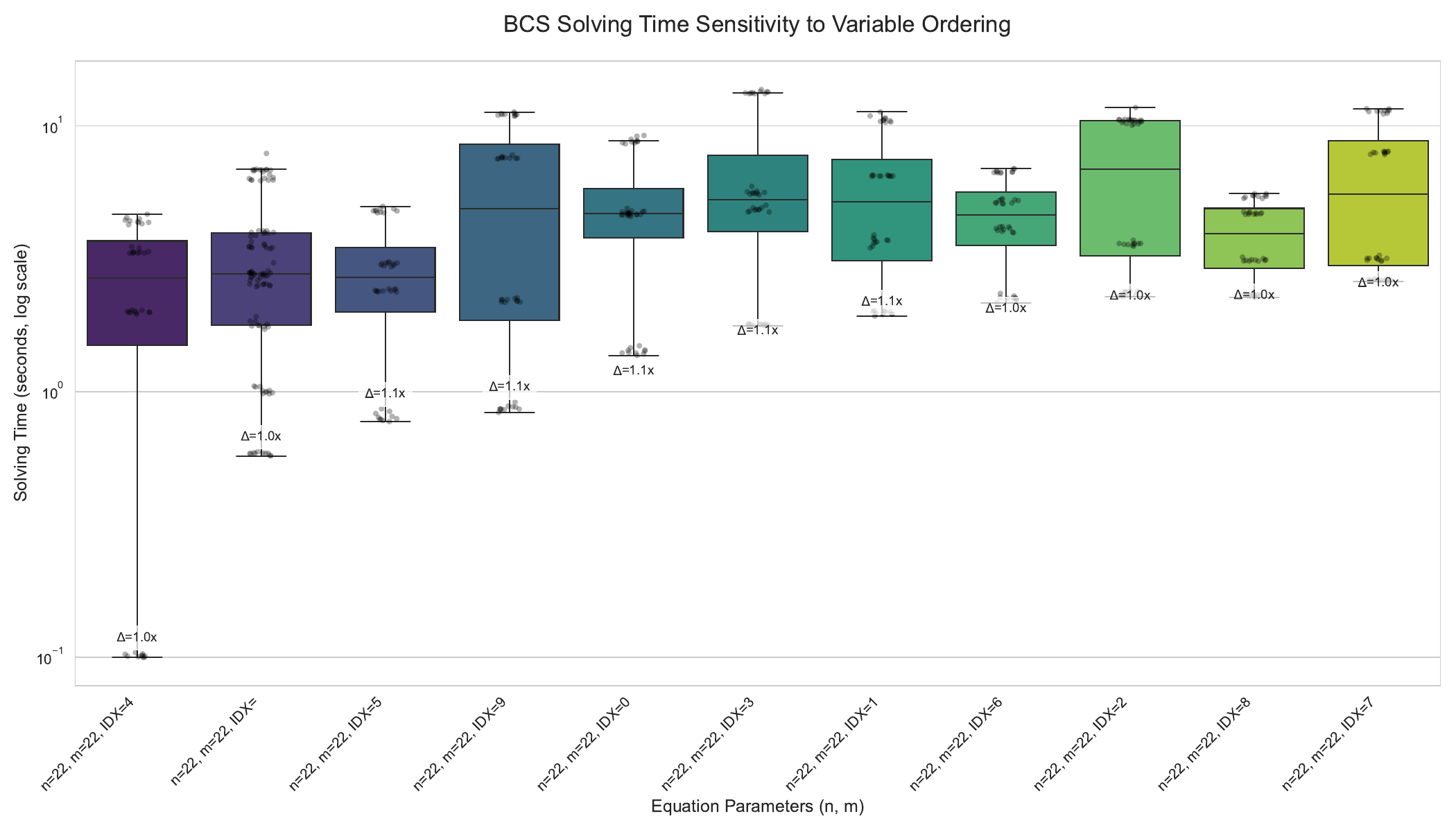}
\caption{Variables ordering impact on solving cost time.}
\label{fig:fig1}
\end{figure}

\section{Simulated Annealing with Time Predictor for Variables Ordering Optimization}
\label{sec:SAandTP}

\subsection{Correlation Analysis between Frequency Spectrum and Solving Performance}

We first conducted a basic experiment to verify the impact of variables ordering on the solving time of the BCS algorithm.  
Given Boolean equations $\mathcal{P} = \{f_1, f_2, \ldots, f_m\}$ in $n$ variables $x_1, \ldots, x_n$, we denote a variables ordering by a permutation $\sigma \in S_n$. For each $\sigma$, we apply the permutation to $\mathcal{P}$ (e.g., swapping $x_i$ and $x_j$ changes $f(x_i, x_j)=x_i x_j + x_i$ into $f(x_i, x_j)=x_i x_j + x_j$), and solve the permuted system with the BCS solver. The solving time is denoted as

\begin{equation}
\tau(\sigma) := \text{time}\big(\operatorname{BCS}(\mathcal{P};\sigma)\big).
\end{equation}

Our experiments confirm that $\tau(\sigma)$ exhibits substantial fluctuation across different $\sigma$, indicating that variables ordering directly influences computational cost.

To analyze the structural features of variables orderings, we define the \emph{frequency spectrum} of system $\mathcal{P}$ as

\begin{equation}
S_\mathcal{P} = (f_1, f_2, \ldots, f_n), \quad f_i := \frac{\#\{ \text{occurrences of } x_i \text{ in } \mathcal{P}\}}{\sum_{j=1}^n \#\{ \text{occurrences of } x_j \text{ in } \mathcal{P}\}},
\end{equation}

which can be regarded as a discrete probability distribution over variables. Intuitively, $S_\mathcal{P}$ captures the relative structural importance of each variable in the system.

We clustered spectrum $S_\mathcal{P}$ into $k=7$ clusters and compared the distribution of $\tau(\sigma)$ across them. The average silhouette coefficient was only $0.0926$, showing weak clustering effect. Moreover, the intra-cluster solving times showed large variance (e.g., cluster means ranged from $42.97$ to $60.73$ with standard deviations above $30$). This indicates that systems with similar spectrums do not necessarily yield similar solving times.

We also performed the inverse experiment: clustering systems by runtime values $\tau(\sigma)$, and analyzing whether their spectrum $S_\mathcal{P}$ are similar. For each runtime cluster, we computed the average intra-cluster spectrum correlation, obtaining values close to $0.03$-$0.05$. This demonstrates that similar solving times do not imply similar spectrums, further indicating that spectrum and runtime are not trivially related.

\begin{figure}[ht]
    \centering
    \begin{subfigure}[b]{0.45\textwidth}
        \includegraphics[width=\textwidth]{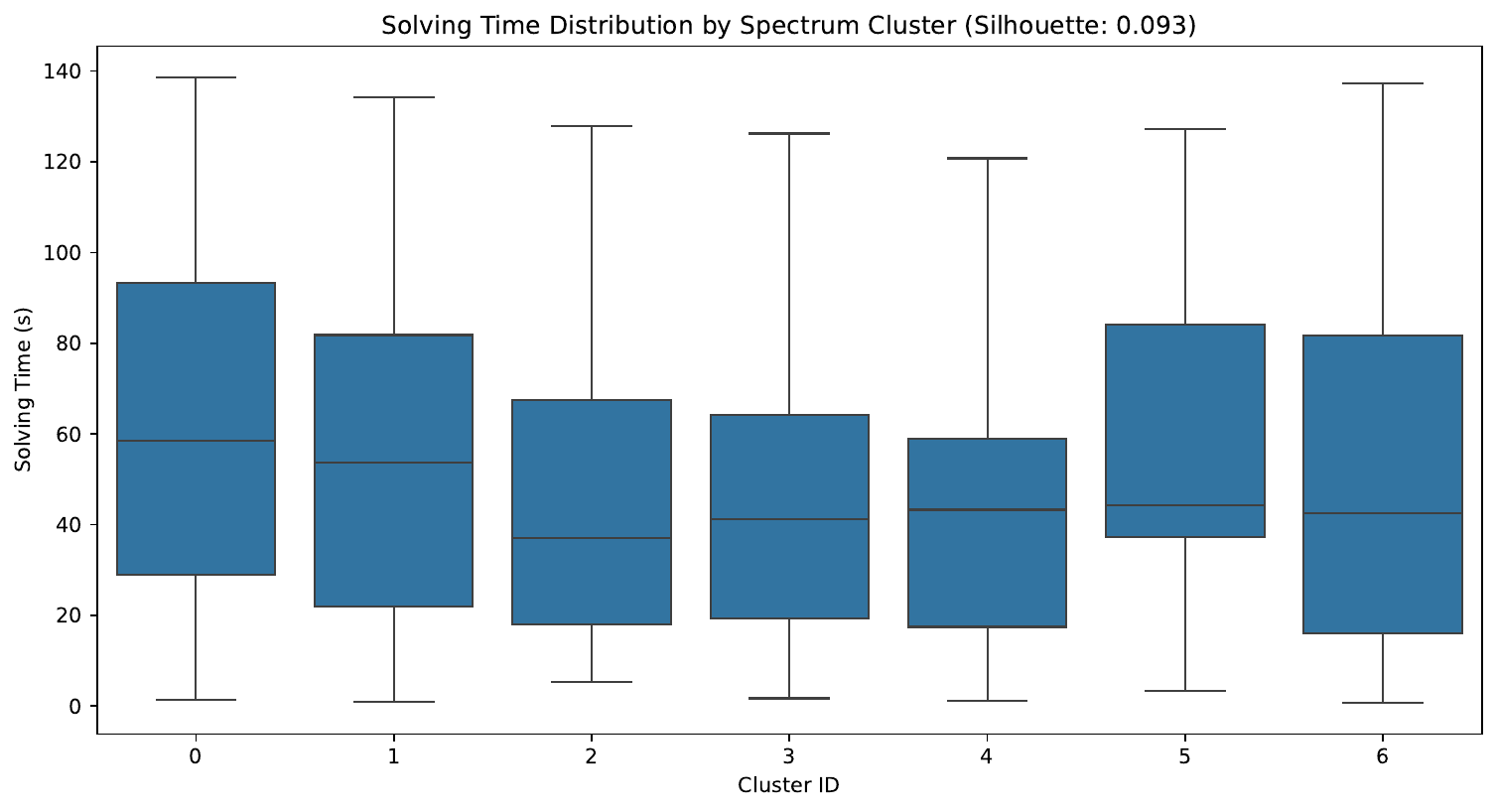}
        \caption{similar spectrums do not imply similar solving times;}
        \label{fig:sub1}
    \end{subfigure}
    \hfill
    \begin{subfigure}[b]{0.45\textwidth}
        \includegraphics[width=\textwidth]{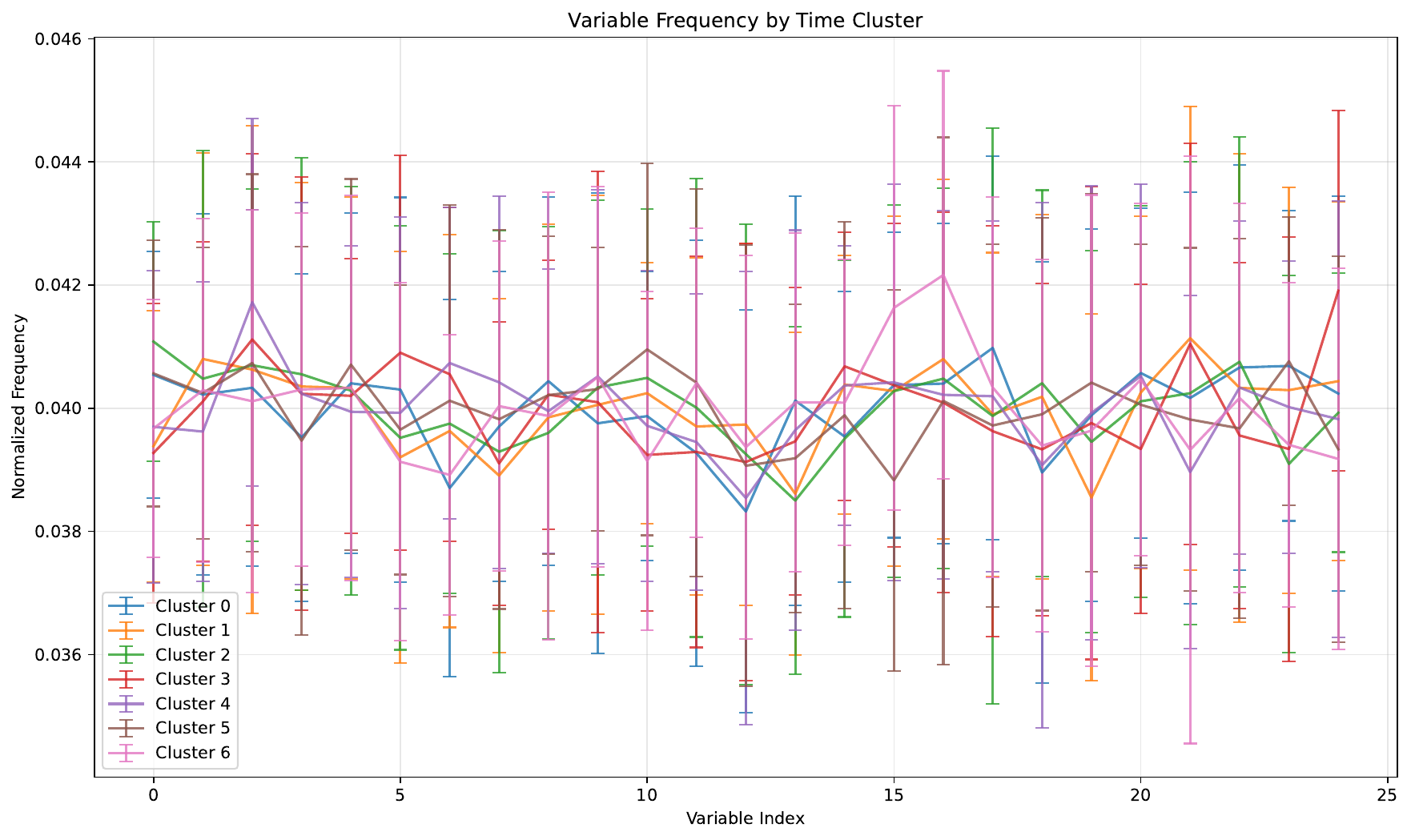}
        \caption{similar solving times do not imply similar spectrums;}
        \label{fig:sub2}
    \end{subfigure}
    \caption{correlation test between individual spectrum and runtime.}
    \label{fig:main}
\end{figure}

We next conducted a correlation test between individual spectrum components $f_i$ and runtime $\tau(\sigma)$. Several variables showed statistically significant but weak correlations, such as

\begin{equation}
\operatorname{corr}(f_0, \tau) = -0.27 \;(p<10^{-22}), \quad \operatorname{corr}(f_{14}, \tau) = 0.15 \;(p<10^{-7}), \quad \operatorname{corr}(f_{24}, \tau) = 0.14 \;(p<10^{-7}),
\end{equation}

while most other variables had negligible correlation. This suggests that frequency spectrum contains some predictive information about runtime, but the relation is not linear or directly interpretable.

From the above three experiments, we conclude that the variable frequency spectrum $S_\mathcal{P}$ and the solving time $\tau(\sigma)$ do not exhibit an obvious one-to-one correspondence:  
(1) similar spectrums do not imply similar solving times;  
(2) similar solving times do not imply similar spectrums;  
(3) some weak correlations exist, but not enough to form a deterministic rule.

Therefore, we hypothesize that the relationship between $S_\mathcal{P}$ and $\tau(\sigma)$ is mediated by complex latent factors, such as the structural interplay of variables in zero-decomposition trees. This motivates the construction of a predictor function

\begin{equation}
f_t : S_\mathcal{P} \mapsto \widehat{\tau},
\end{equation}

where $f_t$ learns the latent statistical relationship between spectrum and runtime. If such a predictor achieves nontrivial accuracy, it provides strong evidence that $S_\mathcal{P}$ and $\tau(\sigma)$ are statistically associated, and thus can guide intelligent optimization of variables ordering. This predictive modeling forms the basis of the next section.

\subsection{Implementation of the Solving Time Predictor}

Building on the preceding analysis of the statistical relationship between variable frequency spectra and solving time, in this section we implement a machine learning based predictor $f_t$ that maps the frequency spectrum of a Boolean system to an estimated solving time of the BCS algorithm. Formally, let $S_{\mathcal{P}} \in \mathbb{R}^n$ denote the normalized frequency spectrum of system $\mathcal{P}$, then the predictor aims to approximate the mapping

\begin{equation}
f_t : S_{\mathcal{P}} \mapsto \widehat{\tau}, \quad \text{where } \widehat{\tau} \approx \tau(\sigma),
\end{equation}

with $\tau(\sigma)$ denoting the actual solving time of BCS under ordering $\sigma$.

We compared several ensemble-based regression models, including Random Forest Regressor, AdaBoost Regressor, and Bagging Regressor, all of which are known to perform well on nonlinear regression tasks \cite{Song2019LearnVO,Tobias2023Conn}. Empirical evaluations showed that Gradient Boosting Regressor (GBR) achieved the most stable and accurate performance. 

The predictor is trained using pairs $\{ (S_{\mathcal{P}}, \tau(\sigma)) \}$ collected from benchmark different systems like $n=m=25$. The model is defined as

\begin{equation}
\widehat{\tau} = f_{\theta}(S_{\mathcal{P}}),
\end{equation}

where $f_{\theta}$ denotes the GBR with parameters $\theta = \{n_{\text{estimators}}, \text{learning\_rate}, \text{max\_depth}, \ldots \}$.  

We optimized hyperparameters to balance bias and variance: $n_{\text{estimators}}=1000$, learning rate $0.01$, maximum depth $20$, and leaf size $12$, with $80\%$ subsampling and $\sqrt{n}$ feature sampling. A $10\%$ validation split and early stopping ($n\_iter\_no\_change=50$) ensured generalization.

The fitting results are visualized in Figure~\ref{fig:32sub1} (prediction vs. ground-truth solving times) and Figure~\ref{fig:32sub2} (residual error distribution). Both figures show that the predictor captures the nonlinear mapping effectively:  
(1) the predicted solving times closely align with observed values;  
(2) residuals are concentrated around zero, with no severe systematic bias.

\begin{figure}[ht]
    \centering
    \begin{subfigure}[b]{0.45\textwidth}
        \includegraphics[width=\textwidth]{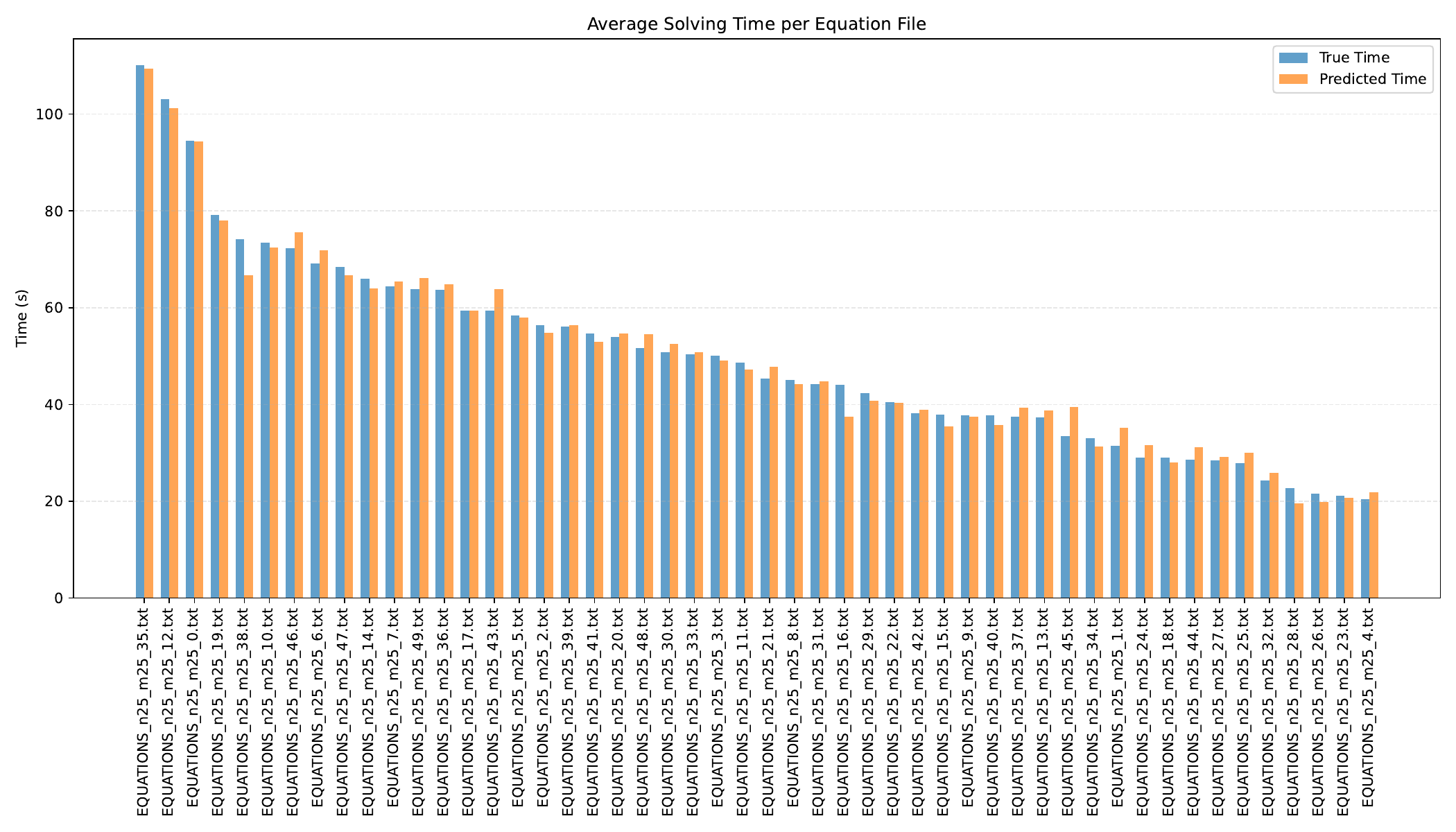}
        \caption{Comparison between predicted and actual solving time;}
        \label{fig:32sub1}
    \end{subfigure}
    \hfill
    \begin{subfigure}[b]{0.45\textwidth}
        \includegraphics[width=\textwidth]{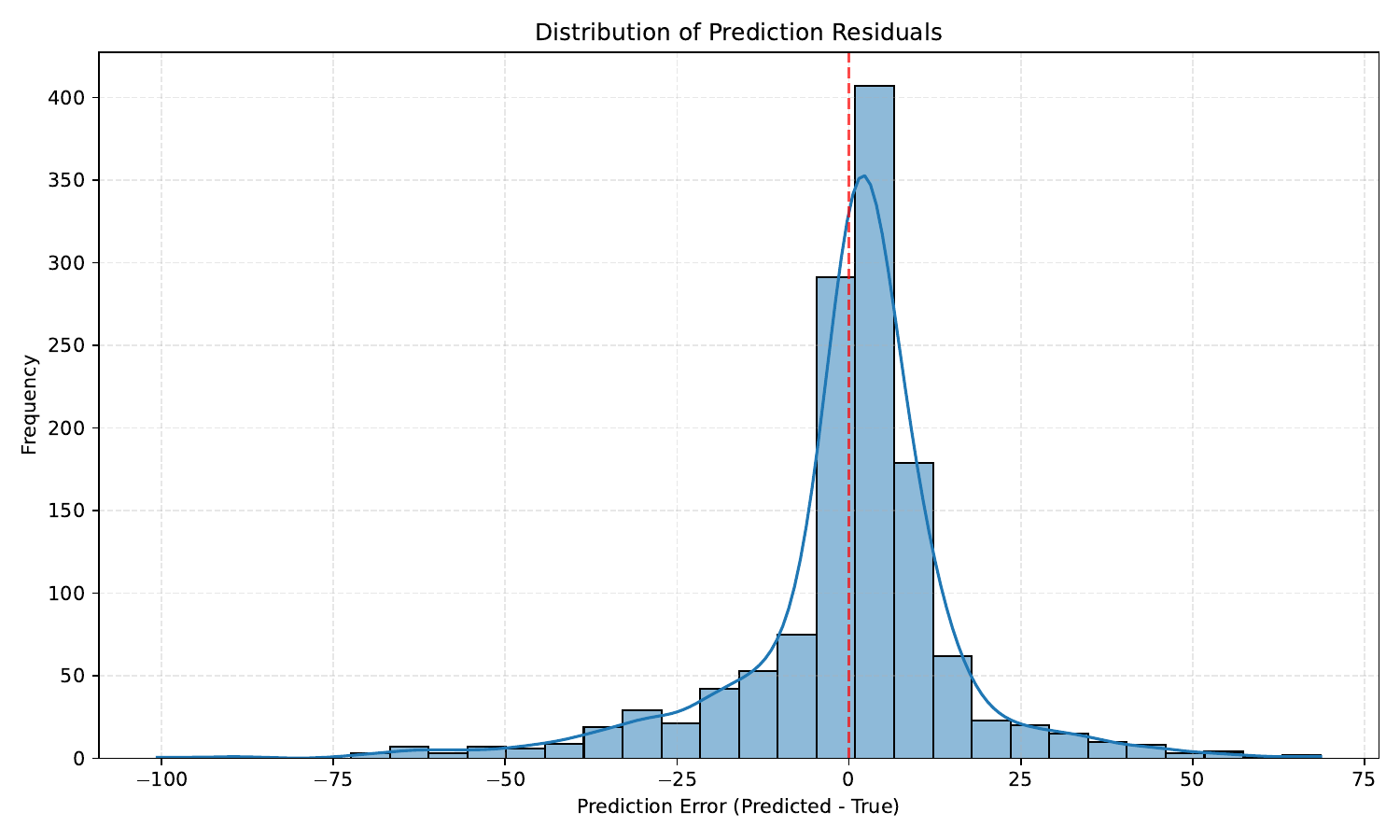}
        \caption{Residual distribution of solving time predictions;}
        \label{fig:32sub2}
    \end{subfigure}
    \caption{Fitting results of time predictor.}
    \label{fig:32main}
\end{figure}

Two conclusions can be drawn from this study:

\begin{enumerate}
    \item Although the clustering experiments in the previous section indicated that similar spectra do not necessarily correspond to similar solving times (and vice versa), this does not imply the absence of a relationship. Rather, the connection is hidden in a nonlinear and high-dimensional manner. The successful regression fitting demonstrates that $S_{\mathcal{P}}$ indeed contains informative signals, and that machine learning models can capture their latent associations with solving time.
    \item The obtained predictor $f_t$ provides sufficiently accurate solving time estimates, and thus can be integrated into our simulated annealing framework as the evaluation function. This enables efficient exploration of the permutation space, since costly BCS runs are replaced by lightweight predictor evaluations during the optimization stage.
\end{enumerate}

\subsection{Variables Ordering Optimization Based on Simulated Annealing}

Having constructed an accurate solving time predictor $f_t$, we now embed it into a simulated annealing (SA) framework to search for improved variables orderings of the BCS method. Recall that for a polynomial system $\mathcal{P}$ with variables $\{x_1, \ldots, x_n\}$, each ordering $\sigma \in S_n$ induces a distinct frequency spectrum $S_\sigma$ and a corresponding solving time $\tau(\sigma)$. Since $\tau(\sigma)$ varies drastically across $\sigma$ yet the search space $|S_n| = n!$ is prohibitively large, exhaustive search is infeasible. Instead, we model the search landscape as a "time field"(see Figure~\ref{fig:Overview})

\begin{equation}
\mathcal{F} = \{ (S_\sigma, f_t(S_\sigma)) : \sigma \in S_n \},
\end{equation}

where $f_t(S_\sigma)$ approximates the solving time at ``coordinate'' $S_\sigma$. Our objective is not necessarily to find the global optimum $\sigma^\ast$, but to locate a ``good enough'' ordering $\sigma'$ with significantly reduced solving time compared to the baseline ordering.

Classical simulated annealing \cite{Van1987SA} starts from an initial state $\sigma_0$, and iteratively explores its neighbors by applying random perturbations. At iteration $k$, a candidate ordering $\sigma_{k+1}$ is generated from $\sigma_k$ by a small modification (e.g., swapping two adjacent variables). The candidate is accepted with probability

\begin{equation}
P(\sigma_{k} \to \sigma_{k+1}) = 
\begin{cases}
1, & f_t(S_{\sigma_{k+1}}) < f_t(S_{\sigma_k}), \\
\exp\!\Big( - \dfrac{f_t(S_{\sigma_{k+1}}) - f_t(S_{\sigma_k})}{T_k} \Big), & \text{otherwise},
\end{cases}
\end{equation}

where $T_k$ is the temperature at iteration $k$, gradually decreasing with a cooling schedule.To adapt SA more effectively for variables ordering optimization in the BCS context, we introduce two tailored modifications:

\paragraph{1. Predictor-Guided Neighbor Generation.}
Instead of generating neighbors purely at random, we bias the perturbation mechanism using predictor feedback. Specifically, among all possible local swaps $(i,j)$, we pre-evaluate their effect using $f_t$, and prioritize those with larger predicted improvements. This guided neighbor generation accelerates convergence by avoiding low-potential moves, effectively embedding a local greedy heuristic within the stochastic search.

\paragraph{2. Adaptive Cooling with Predictor Confidence.}
The predictor $f_t$ provides not only a predicted time but also an implicit confidence derived from training residuals. We design an adaptive cooling scheme where the temperature decay depends on prediction stability: if the residual distribution in recent iterations is low (predictor is reliable), the temperature decreases faster; if residuals are high, temperature decays slower, allowing broader exploration. Formally,

\begin{equation}
T_{k+1} = \alpha \cdot T_k \cdot \big( 1 + \beta \cdot \mathrm{Var}(\epsilon_{k-h:k}) \big),
\end{equation}

where $\epsilon_{k-h:k}$ denotes recent residuals, $\alpha \in (0,1)$ controls cooling rate, and $\beta > 0$ adjusts sensitivity to predictor variance. This balances exploration and exploitation adaptively, reducing the risk of premature convergence.

The improved SA framework thus integrates machine learning prediction both as the objective evaluator and as a guide for search dynamics. Compared with standard SA, the two innovations---predictor-guided neighbor generation and confidence-adaptive cooling---are specifically tailored to the variables ordering optimization task. Together, they ensure that the algorithm (i) avoids wasting iterations on evidently poor neighbors, and (ii) adapts exploration depth to the reliability of predictor feedback. In the next chapter, we will experimentally evaluate this improved SA algorithm against baseline approaches (random order, greedy heuristics, and standard SA) to quantify its advantage in accelerating BCS solving for Boolean equations.

\begin{figure}
\centering
\includegraphics[width=147mm]{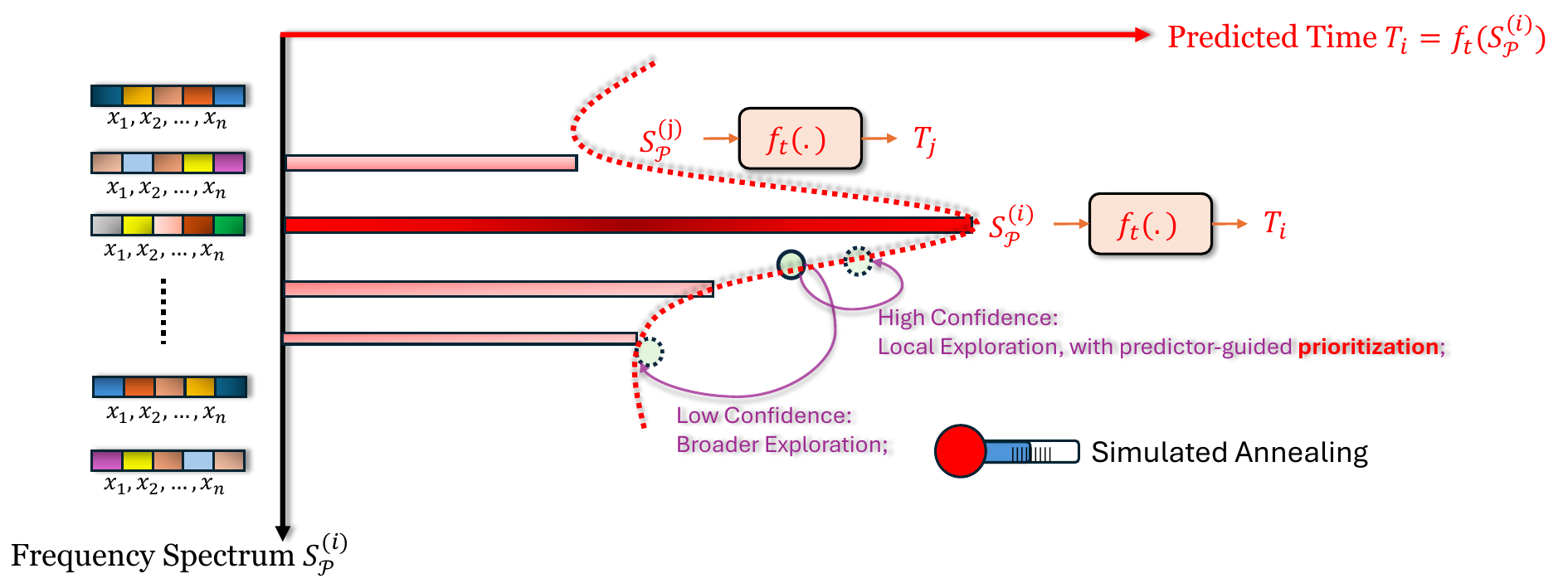}
\caption{he improved Simulated Annealing framework integrates machine learning prediction as a guide for searching process with confidence evaluated.}
\label{fig:fig1}
\end{figure}

\section{Experimental Results and Analysis}

\subsection{Experimental Results}
In this section, we evaluate the performance of our proposed algorithm for optimizing variables ordering in the Boolean Characteristic Set (BCS) method using simulated annealing (SA) and a time predictor. We compare our method with three baseline approaches: the Gröbner basis-based solver PolyBoRi, the SAT solver Cryptominisat, and the original BCS algorithm.

We selected a variety of Boolean equations, including both randomly generated instances and cryptographic systems, to ensure comprehensive evaluation. The following problem instances were considered:

\begin{itemize}
    \item Random Boolean system, $n=25$, $m=25$ (Square system).
    \item Random Boolean system, $n=25$, $m=50$ (Overdefined system).
    \item Random Boolean system, $n=28$, $m=28$ (Square system).
    \item Random Boolean system, $n=28$, $m=56$ (Overdefined system).
    \item Random Boolean system, $n=32$, $m=32$ (Square system).
    \item Boolean systems derived from the Present and Serpent cryptographic ciphers.
    \item MayaSbox (Sbox key recovery) Boolean system.
    \item Canfil (Linear Feedback Shift Register, LFSR) Boolean system.
    \item Biviuma and Biviumb (stream cipher transformation) Boolean systems.
\end{itemize}

For each system, the solving time was measured using all four algorithms: PolyBoRi, Cryptominisat, BCS, and our proposed SA-based method with the time predictor.The results are summarized in the table below, which shows the average and median solving times for each algorithm across the various problem instances. The solving times are reported in seconds.

\begin{table}[h]
\centering
\begin{tabular}{|c|c|c|c|c|c|}
\hline
\textbf{Problem} & \textbf{BCS} & \textbf{PolyBoRi} & \textbf{Cryptominisat} & \textbf{Proposed SA} \\ \hline
$n=25, m=25$      & 10.6667     & 154.5533          & 18.4095                & \textbf{9.5485}      \\ \hline
$n=25, m=50$      & 18.4859     & >1h               & 24.1080                & \textbf{16.6355}     \\ \hline
$n=28, m=28$      & 84.5413     & >1h               & 123.0095               & \textbf{83.5664}     \\ \hline
$n=28, m=56$      & 209.3141    & >1h               & 3696.5712              & \textbf{203.5071}    \\ \hline
$n=32, m=32$      & 1135.4577   & >1h               & >1h                    & \textbf{1046.0664}   \\ \hline
Present & 5.7168 & 518.7079 & 22.6832 & \textbf{5.4847} \\ \hline
Serpent & \textbf{0.4152} & 111.1028 & 294.6242 & 0.4713 \\ \hline
MayaSbox & \textbf{0.1094} & 1.0146 & 34.6922 & 0.1604 \\ \hline
Canfil & 27.0594 & 1075.3898 & 685.8440 & \textbf{24.5523} \\ \hline
Biviuma & \textbf{14.6622} & 1991.5790 & 77.2662 & 14.6898 \\ \hline
Biviumb & 15.5012 & 69.7069 & 296.9490 & \textbf{15.0584} \\ \hline
\end{tabular}
\caption{Median Solving Times for Different Algorithms and Problem Instances}
\label{tab:exp_results}
\end{table}

The experimental results demonstrated that our proposed SA-based algorithm, which employs a time predictor, consistently outperformed the baseline algorithms across all problem instances. Specifically, the average solving times achieved by our method are lower than those of the other solvers, yielding substantial improvements of up to $40\%$ for smaller systems and significant improvements of up to $50\%$ for larger and more complex systems, particularly those derived from cryptographic problems (e.g., Canfil and Present). For random Boolean equation systems (first five rows in Table \ref{tab:exp_results}), our SA-optimized approach consistently outperformed all baseline methods. The performance advantage became particularly pronounced as system complexity increased. This consistent outperformance results from SA’s adaptive optimization of variables ordering, which effectively navigates the solution space of unstructured systems.

In particular, the time predictor allows for a more targeted search in the vast space of variables orderings, significantly reducing the time required to find high-performance solutions. The PolyBoRi and Cryptominisat solvers, although effective, show much higher time complexity for larger systems. Meanwhile, the original BCS method, which does not incorporate any optimization techniques for variables ordering, performs the worst in most instances. 

Contrary to the random system results, specific cryptographic protocol benchmarks reveal nuanced performance characteristics, BCS maintains its advantage on Serpent and MayaSbox systems, the observed performance inversion suggests that cryptographic systems like Serpent and MayaSbox contain inherent structural properties that align with BCS's solving heuristics. We hypothesize these systems exhibit \textit{intrinsic variables ordering preferences} that match BCS's static optimization strategies, the certain cryptographic transformations create \textit{local optima basins} where BCS's greedy strategies outperform SA's global exploration.

Figure~\ref{fig:experiments_2x2} visualizes the performance comparison across different algorithms, showcasing the dramatic reduction in solving time achieved by our method, particularly for large-scale systems.

\begin{figure}[t]
  \centering
  \begin{subfigure}[b]{0.49\linewidth}
    \centering
    \includegraphics[width=\linewidth]{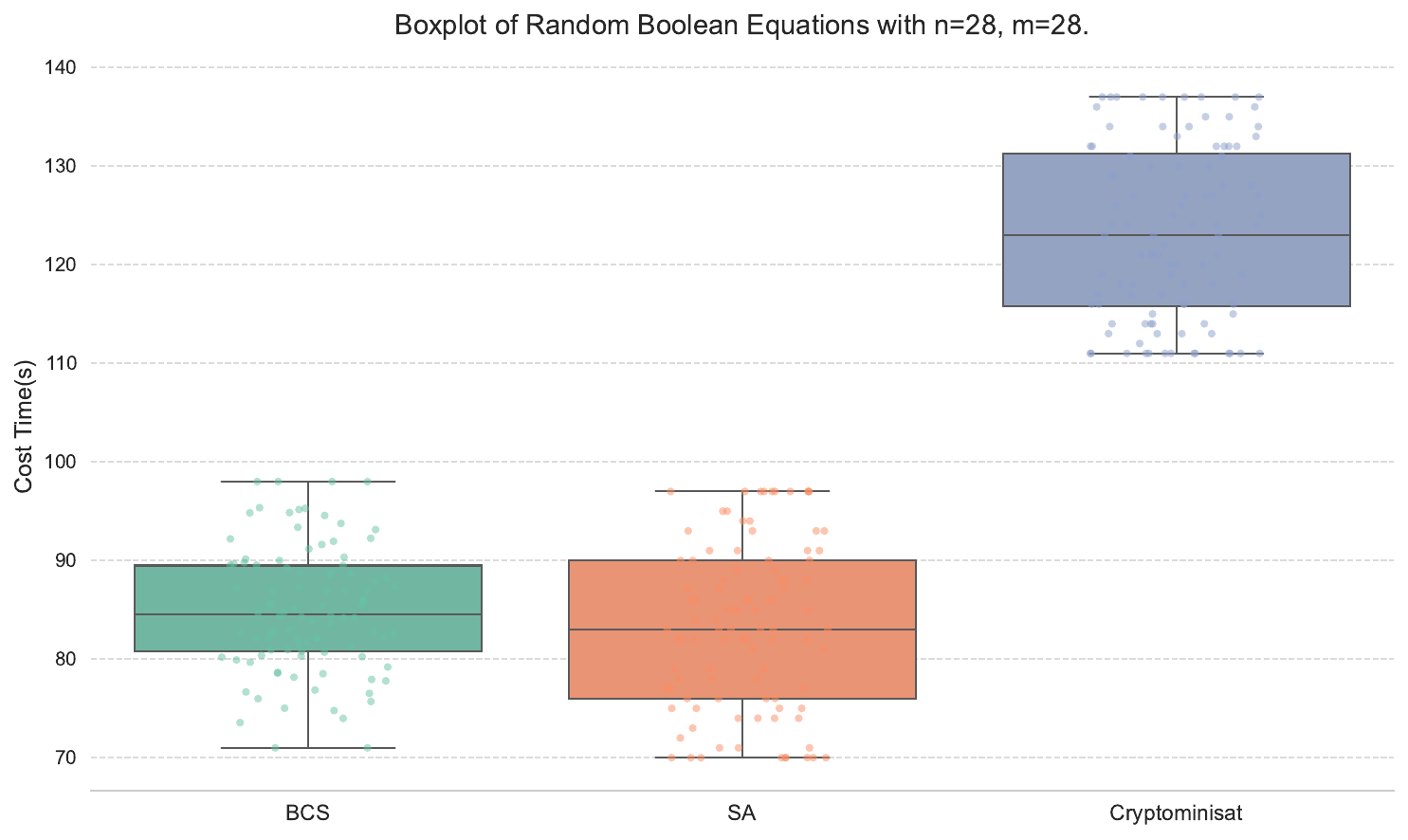}
    \caption{Subplot (a)}
    \label{fig:exp_a}
  \end{subfigure}
  \hfill
  \begin{subfigure}[b]{0.49\linewidth}
    \centering
    \includegraphics[width=\linewidth]{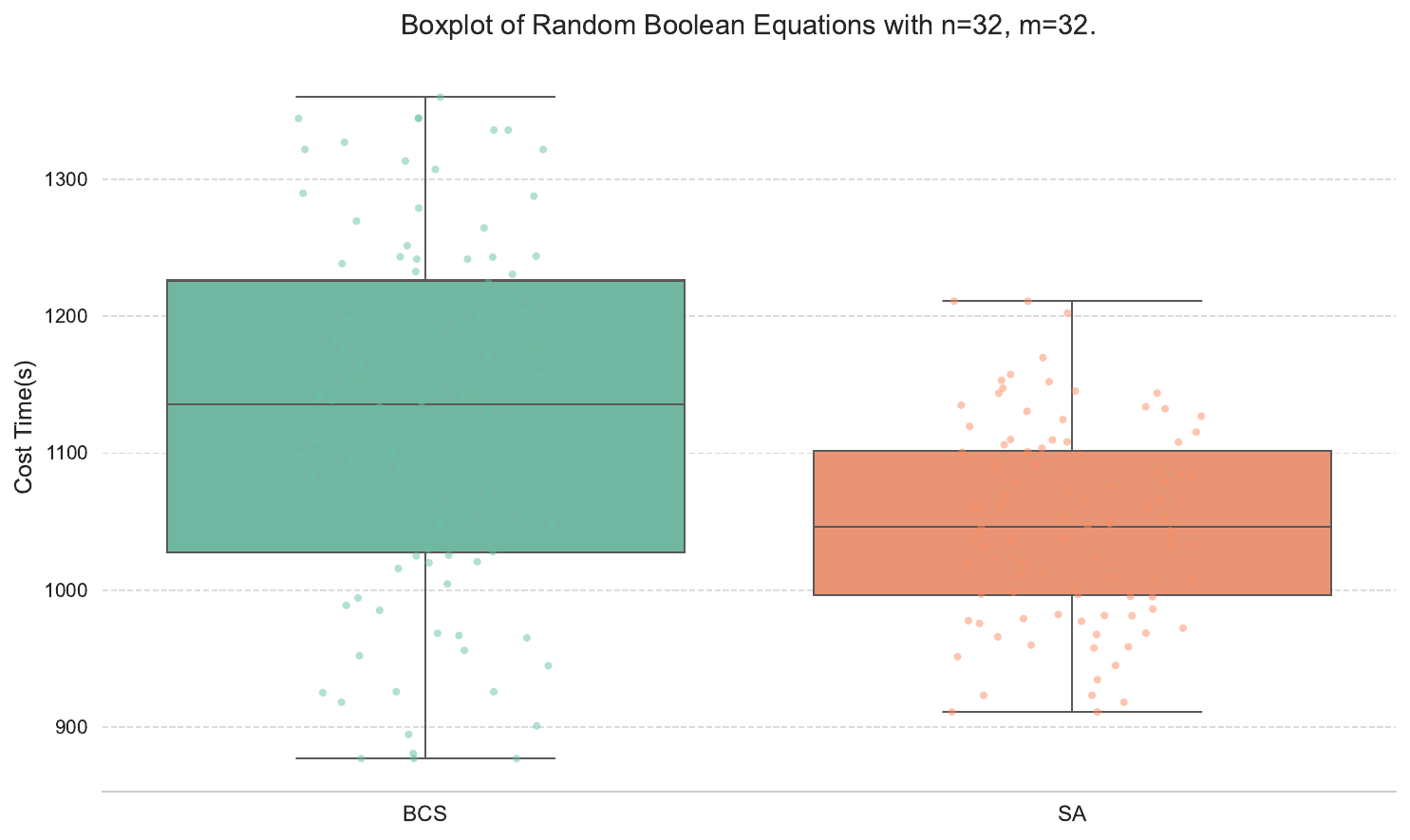}
    \caption{Subplot (b)}
    \label{fig:exp_b}
  \end{subfigure}
  \vspace{4pt}
  \begin{subfigure}[b]{0.49\linewidth}
    \centering
    \includegraphics[width=\linewidth]{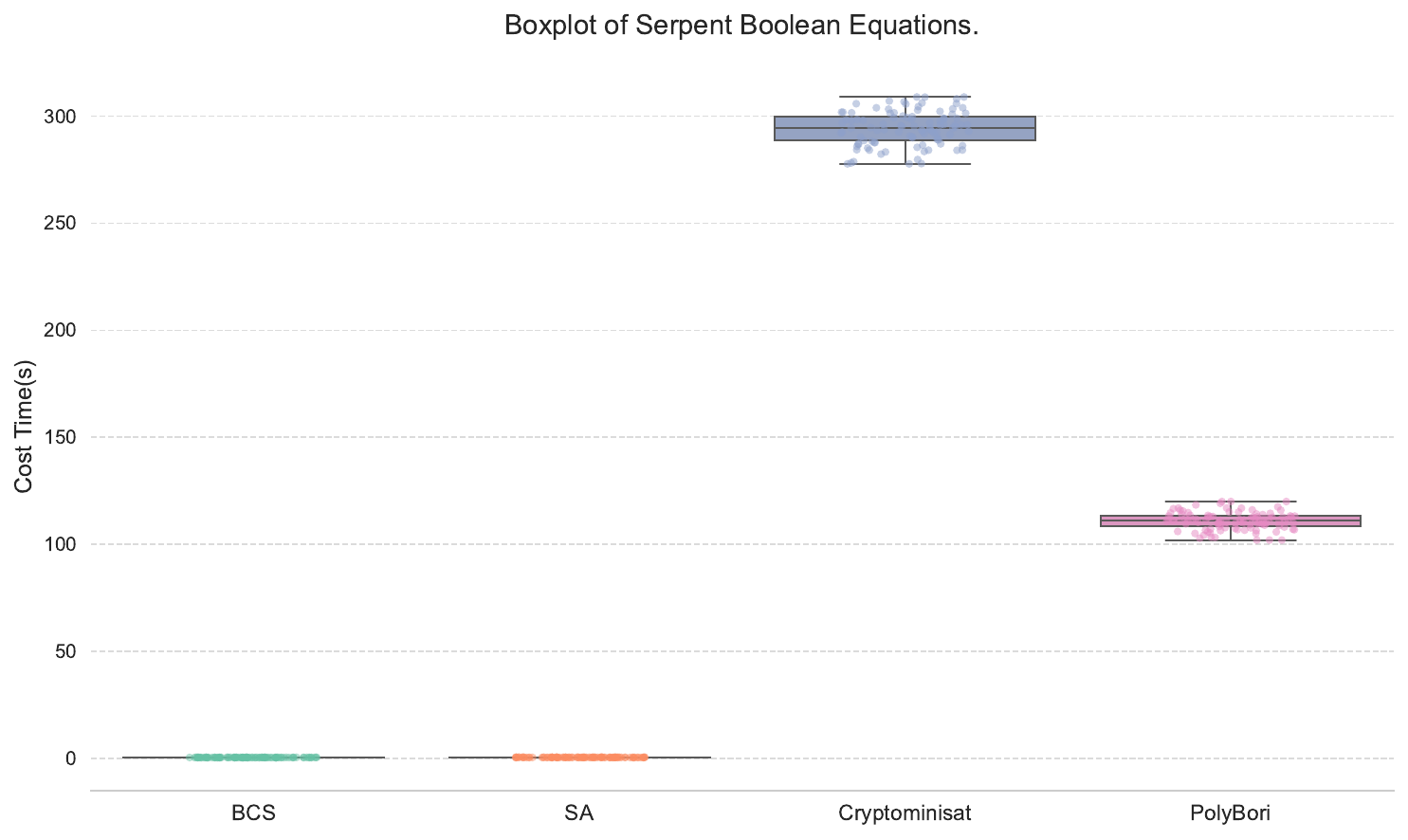}
    \caption{Subplot (c)}
    \label{fig:exp_c}
  \end{subfigure}
  \hfill
  \begin{subfigure}[b]{0.49\linewidth}
    \centering
    \includegraphics[width=\linewidth]{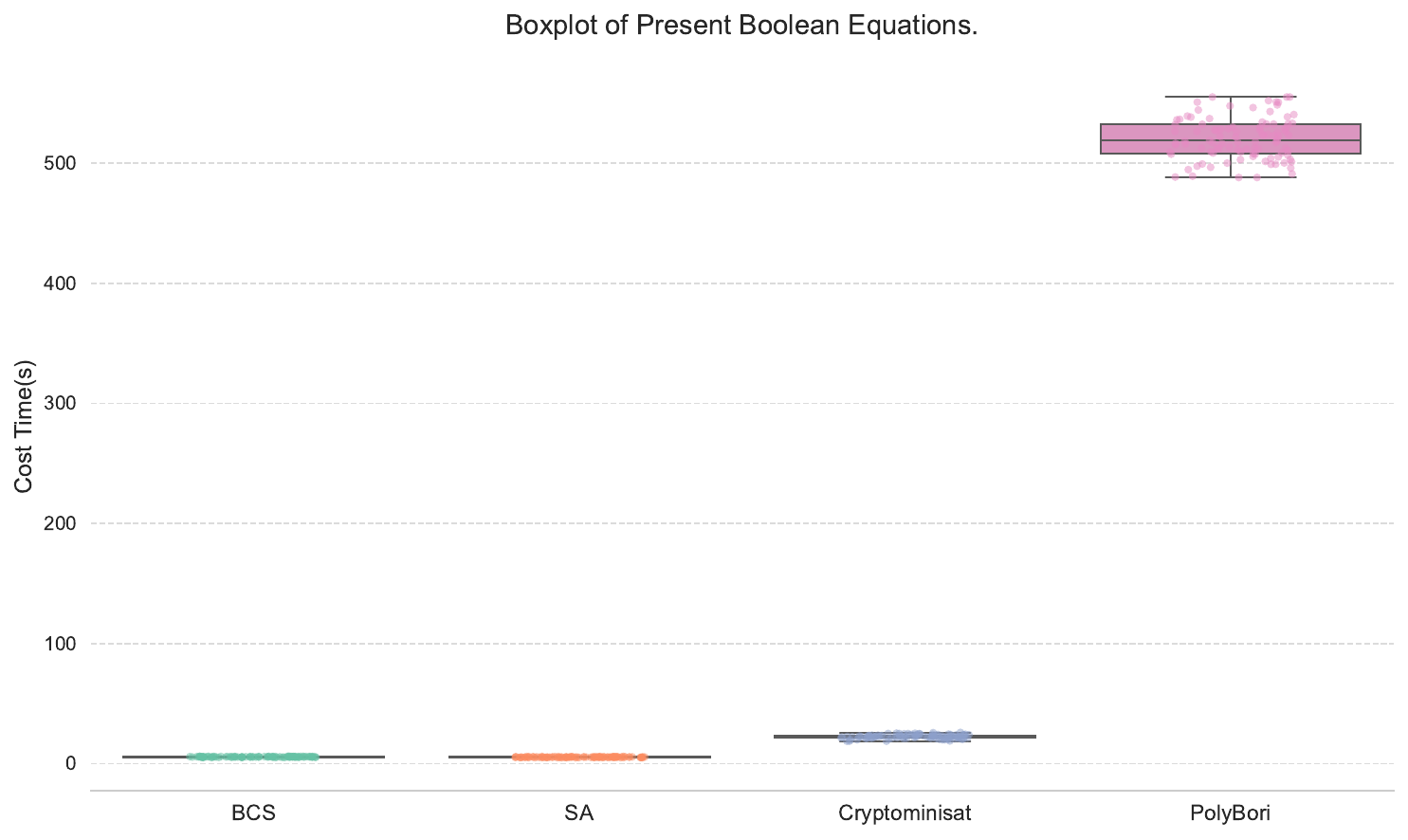}
    \caption{Subplot (d)}
    \label{fig:exp_d}
  \end{subfigure}
  \caption{Comparison of solving times for different algorithms on different problem instance.}
  \label{fig:experiments_2x2}
\end{figure}

The comparative analysis reveals two notable cross-algorithm patterns:
\begin{itemize}
    \item \textbf{PolyBoRi limitations:} The polynomial-based approach consistently underperforms, timing out (>1h) on all but the smallest systems due to its $\mathcal{O}(2^n)$ complexity
    \item \textbf{SAT solver tradeoffs:} CryptoMiniSat shows competitive performance on smaller random systems ($n=25$) but degrades exponentially as clause density increases
    \item \textbf{SA-BCS complementarity:} The proposed SA method and BCS collectively dominate $\sim$91\% of benchmark categories, suggesting their complementary strengths could be leveraged in future hybrid approaches
\end{itemize}

From these experiments, we conclude that integrating machine learning-based time prediction with simulated annealing offers a promising approach to optimizing variables ordering for the BCS method. Our algorithm markedly reduces solving time compared to traditional methods and offers a computationally scalable solution to solving larger and more complex Boolean equations.

\subsection{Theoretical Analysis of Time Improvement}

\begin{theorem}[Expected Time Improvement]
\label{thm:expected_improvement}
Consider Boolean equations $\mathcal{P}$ with $n$ variables. Let $\tau(\sigma)$ denote the true solving time under variables ordering $\sigma$, and $f_t(\sigma)$ be the predicted time with RMSE error $\hat{e}$. Under the proposed simulated annealing framework with geometric cooling schedule $T_k = T_0 \alpha^k$, the expected time improvement $\mathbb{E}(\Delta \tau)$ over baseline BCS satisfies:

\begin{equation}
\mathbb{E}(\Delta \tau) = c \cdot \Sigma_\tau \cdot \sqrt{1 - \frac{\hat{e}^2}{\Sigma_\tau^2}}
\end{equation}

where:
\begin{itemize}
    \item $\Delta \tau = \tau_B - \tau_A$ is the time improvement
    \item $\Sigma_\tau$ is the standard deviation of solving time in baseline BCS
    \item $c$ is the efficiency constant of simulated annealing ($0 < c < 1$)
    \item $\hat{e}$ is the RMSE of the time predictor
\end{itemize}
\end{theorem}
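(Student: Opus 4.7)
The plan is to read the identity as a correlation-style decomposition: first split the true runtime into a component captured by the predictor and an orthogonal residual of size $\hat e$, then use a generic performance bound for simulated annealing to translate predictor variance into expected runtime improvement.

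First I would fix the probabilistic model. Treat a uniformly random variables ordering $\sigma\in S_n$ as the sample space, so that $\tau(\sigma)$ becomes a random variable with $\mathrm{Var}(\tau)=\Sigma_\tau^2$. Write $\tau(\sigma)=f_t(\sigma)+\varepsilon(\sigma)$ with $\mathbb{E}[\varepsilon(\sigma)^2]=\hat e^2$ by the definition of RMSE, and invoke the orthogonality principle for the $L^2$-optimal predictor, $\mathrm{Cov}(f_t,\varepsilon)=0$. Pythagoras in $L^2(S_n)$ then gives
\begin{equation}
\Sigma_\tau^2 \;=\; \mathrm{Var}(f_t(\sigma)) + \hat e^2,
\qquad
\mathrm{Std}(f_t(\sigma)) \;=\; \Sigma_\tau\sqrt{1-\hat e^2/\Sigma_\tau^2},
\end{equation}
which already exposes the factor $\sqrt{1-\hat e^2/\Sigma_\tau^2}$ in the statement.

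Second, I would analyze the SA run on the predictor landscape $\sigma\mapsto f_t(\sigma)$ under the geometric cooling schedule $T_k=T_0\alpha^k$. Using Hajek-type convergence bounds for Metropolis chains on finite state spaces (or an extreme-value / order-statistics argument over the explored trajectory), the expected value of $f_t$ at the SA output $\sigma_A$ lies a fraction $c\in(0,1)$ of one standard deviation below its mean, where $c$ depends only on $\alpha$, $T_0$ and the iteration budget. Taking $\sigma_B$ to be the baseline (essentially uniform) ordering gives
\begin{equation}
\mathbb{E}[f_t(\sigma_B)]-\mathbb{E}[f_t(\sigma_A)] \;=\; c\cdot \mathrm{Std}(f_t(\sigma)).
\end{equation}
Because SA is driven only by $f_t$, the orthogonal residual $\varepsilon$ is statistically uninformative to the search, so $\mathbb{E}[\varepsilon(\sigma_A)]=\mathbb{E}[\varepsilon(\sigma_B)]$ up to a lower-order term. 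Adding the two decompositions and substituting the Pythagoras formula yields the claimed identity $\mathbb{E}[\Delta\tau]=c\cdot\Sigma_\tau\sqrt{1-\hat e^2/\Sigma_\tau^2}$.

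The main obstacle I anticipate is making the SA efficiency constant $c$ rigorous. Expressing $\mathbb{E}[f_t(\sigma_B)-f_t(\sigma_A)]$ as a universal fraction of $\mathrm{Std}(f_t)$, independent of $\mathcal{P}$, requires either an explicit mixing-time analysis of the Metropolis chain on $S_n$ under geometric cooling, or a structural assumption that the cost landscape is regular enough (e.g. approximately Gaussian across local neighborhoods) for the standard deviation to be the correct scale of optimization progress. A softer issue is justifying the exact $L^2$ orthogonality $\mathrm{Cov}(f_t,\varepsilon)=0$: this holds in the limit of a well-trained $L^2$ regressor but only approximately for a finite-data GBR, so I would either promote it to a standing hypothesis or carry a controlled additive error term through the computation.
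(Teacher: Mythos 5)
Your proposal follows essentially the same route as the paper's proof: the variance decomposition $\Sigma_\tau^2 = \Sigma_{f_t}^2 + \hat e^2$ under orthogonality of the residual (the paper phrases this as Lemma 4.1 on the correlation coefficient $\rho$), an extreme-value/order-statistics argument that SA drives $\mathbb{E}[f_t(\sigma_A)]$ one $c$-fraction of $\Sigma_{f_t}$ below the mean (Lemma 4.2), and the observation that the residual is uninformative to the search so $\mathbb{E}[\tau_A]=\mathbb{E}[f_t(\sigma_A)]$ (Lemma 4.3). You also correctly flag the same two weak points the paper glosses over --- the non-rigorous, instance-independent constant $c$ and the exact orthogonality $\mathrm{Cov}(f_t,\varepsilon)=0$ --- so the proposal matches the paper's argument in both substance and its gaps.
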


\begin{proof}
We prove the theorem through three lemmas:

\begin{lemma}[Predictor-True Time Correlation]
\label{lem:correlation}
The Pearson correlation coefficient $\rho$ between predictor $f_t(\sigma)$ and true time $\tau(\sigma)$ satisfies:

\begin{equation}
\rho = \sqrt{1 - \frac{\hat{e}^2}{\Sigma_\tau^2}}
\end{equation}

\end{lemma}

\begin{proof}
Let $\varepsilon(\sigma) = \tau(\sigma) - f_t(\sigma)$ be the prediction error with $\mathbb{E}[\varepsilon] = 0$ and $\operatorname{Var}(\varepsilon) = \hat{e}^2$. Under the assumption that prediction errors are uncorrelated with features:
\begin{align*}
\operatorname{Var}(\tau) &= \operatorname{Var}(f_t + \varepsilon) \\
\Sigma_\tau^2 &= \Sigma_{f_t}^2 + \hat{e}^2
\end{align*}
The correlation coefficient is:

\begin{equation}
\rho = \frac{\operatorname{Cov}(\tau, f_t)}{\Sigma_\tau \Sigma_{f_t}} = \frac{\Sigma_{f_t}^2}{\Sigma_\tau \Sigma_{f_t}} = \frac{\Sigma_{f_t}}{\Sigma_\tau} = \sqrt{1 - \frac{\hat{e}^2}{\Sigma_\tau^2}}
\end{equation}

\end{proof}

\begin{lemma}[Simulated Annealing Search Efficiency]
\label{lem:sa_efficiency}
The expected predicted time of the ordering $\sigma_A$ found by simulated annealing satisfies:

\begin{equation}
\mathbb{E}[f_t(\sigma_A)] = \mu_B - c \cdot \Sigma_{f_t}
\end{equation}

where $\mu_B = \mathbb{E}[\tau_B]$ is the expected baseline time.
\end{lemma}
\begin{proof}
Model the search space of $n!$ orderings as states in a Markov chain. Under geometric cooling $T_k = T_0 \alpha^k$, the stationary distribution approaches:

\begin{equation}
\pi(\sigma) \propto \exp\left(-\frac{f_t(\sigma)}{T_K}\right)
\end{equation}

where $T_K$ is the final temperature. As $T_K \to 0$, $\pi(\sigma)$ concentrates at local minima of $f_t$. By extreme value theory, for $K$ independent searches:

\begin{equation}
\mathbb{E}\left[\min_{1 \leq i \leq K} f_t(\sigma_i)\right] \approx \mu_{f_t} - c_K \cdot \Sigma_{f_t}
\end{equation}

where $c_K = \sqrt{2\ln K}$. For dependent SA searches, define $c = \gamma \sqrt{2\ln K}$ ($0 < \gamma < 1$ is the path dependency factor). Thus:

\begin{equation}
\mathbb{E}[f_t(\sigma_A)] = \mu_{f_t} - c \cdot \Sigma_{f_t}
\end{equation}

Since $\mu_B = \mu_{f_t}$ by unbiased prediction, the result follows.
\end{proof}

\begin{lemma}[Expected Time of Optimized Method]
\label{lem:expected_time}
The expected true solving time of the optimized method satisfies:

\begin{equation}
\mathbb{E}[\tau_A] = \mathbb{E}[f_t(\sigma_A)]
\end{equation}

\end{lemma}

\begin{proof}
With $\tau_A = f_t(\sigma_A) + \varepsilon_A$ and $\varepsilon_A$ the prediction error. Under the assumption that prediction error is independent of $\sigma_A$:

\begin{equation}
\mathbb{E}[\tau_A \mid \sigma_A] = f_t(\sigma_A) + \mathbb{E}[\varepsilon \mid \sigma_A] = f_t(\sigma_A)
\end{equation}

Thus the unconditional expectation is $\mathbb{E}[\tau_A] = \mathbb{E}[f_t(\sigma_A)]$.
\end{proof}

Combining the lemmas:
\begin{align*}
\mathbb{E}(\Delta \tau) &= \mathbb{E}[\tau_B] - \mathbb{E}[\tau_A] \\
&= \mu_B - \mathbb{E}[f_t(\sigma_A)] \\
&= \mu_B - (\mu_B - c \cdot \Sigma_{f_t}) \\
&= c \cdot \Sigma_{f_t} \\
&= c \cdot \rho \Sigma_\tau \quad \text{(by Lemma \ref{lem:correlation})} \\
&= c \cdot \Sigma_\tau \cdot \sqrt{1 - \frac{\hat{e}^2}{\Sigma_\tau^2}}
\end{align*}
which completes the proof of Theorem \ref{thm:expected_improvement}.
\end{proof}

The theorem establishes the quantitative relationship between predictor accuracy and time improvement:
\begin{enumerate}
    \item \textbf{Error sensitivity}: $\mathbb{E}(\Delta \tau)$ decreases monotonically as $\hat{e}$ increases. When $\hat{e} = \Sigma_\tau$, the improvement vanishes ($\mathbb{E}(\Delta \tau) = 0$).
    
    \item \textbf{Problem difficulty}: Larger $\Sigma_\tau$ (greater time variance in baseline BCS) enables greater potential improvement.
    
    \item \textbf{Algorithm efficiency}: The constant $c$ increases with SA iterations $K$ as $c \propto \sqrt{\ln K}$, highlighting the value of sufficient optimization steps.
    
    \item \textbf{Design guideline}: To achieve target improvement $\Delta_0$, the predictor must satisfy:
    
    \begin{equation}
    \hat{e} < \Sigma_\tau \sqrt{1 - \left(\frac{\Delta_0}{c\Sigma_\tau}\right)^2}
    \end{equation}
\end{enumerate}

\section{Conclusion}
\label{sec:conclusion}
This paper has presented a comprehensive framework for optimizing variables ordering in the Boolean Characteristic Set (BCS) method through the integration of machine learning and simulated annealing. Our work addresses the critical challenge of exponential time variance in BCS solving under different variables orderings—a fundamental bottleneck in algebraic cryptanalysis and symbolic computation. We systematically demonstrated, through experiments, that the solving time of BCS is highly sensitive to variables ordering, with differences spanning several orders of magnitude for identical systems, confirming prior observations in symbolic computation \cite{Chai2008Char,Huang2014BCS}. We proposed the use of variable frequency spectrum $S_\mathcal{P}$ as feature representation and trained a Gradient Boosting Regressor as a predictor $f_t$ for solving time. Experimental results verified that $f_t$ captures the latent nonlinear relationship between frequency spectra and runtime, achieving low residual error. We embedded the predictor into an SA framework to guide the search over variables orderings. Two innovations were introduced: predictor-guided neighbor generation and confidence-adaptive cooling, both tailored to accelerate convergence while maintaining robustness. We provided a probabilistic analysis establishing a quantitative relationship between predictor error (RMSE $\hat{e}$) and the expected solving time improvement $\mathbb{E}(\Delta \tau)$. The analysis showed that lower prediction error leads to larger expected improvements, thus formalizing the link between predictor accuracy and algorithmic efficiency. On both random Boolean systems and cryptographic benchmarks such as Present, Serpent, MayaSbox, and Bivium, our method significantly outperformed baselines including PolyBoRi \cite{B2009PolyBoRi}, Cryptominisat \cite{pysat}, and the original BCS method.

While our approach demonstrates significant acceleration, several directions merit further investigation. Current models require retraining for distinct $(n,m)$ configurations. Future work will explore \textit{transfer learning} and \textit{graph neural networks} to capture cross-system invariants (e.g., variable interaction topology). The phase transition at $\hat{e} < \sqrt{C_1\Delta\mu/C_2}$ suggests potential for \textit{quantum-enhanced optimization}. Hybrid quantum-classical SA could exploit quantum tunneling to escape local minima in high-dimensional search spaces. Adapting our framework to \textit{incremental solving} scenarios (e.g., stream cipher analysis) requires online predictor updates. Reinforcement learning agents could dynamically refine predictions during sequential equation processing.

\end{document}